\newtheorem{theorem}{Theorem}
\newtheorem{lemma}[theorem]{Lemma}
\newtheorem{conjecture}[theorem]{Conjecture}
\newtheorem{corollary}[theorem]{Corollary}
\newtheorem*{claim}{Claim}
\newtheorem{theoremm}{Theorem}
\theoremstyle{definition}
\newtheorem{definition}[theorem]{Definition}
\newtheorem{example}[theorem]{Example}
\newtheorem{question}[theorem]{Question}
\newcommand{\Z}{\mathbb{Z}}
\newcommand{\R}{\mathbb{R}}
\theoremstyle{remark}
\newtheorem{remark}[theorem]{Remark}
\title{The word problem of $\Z^n$ is a multiple context-free language}
\author{Meng-Che ``Turbo" Ho}
\thanks{The material is based upon work supported by the National Science Foundation under Grant No. DMS-1440140 while the author was in residence at the Mathematical Sciences Research Institute in Berkeley, California, during the Fall 2016 semester.}
\date{\today}
\begin{document}

\begin{abstract}
The \emph{word problem} of a group $G = \langle \Sigma \rangle$ can be defined as the set of formal words in $\Sigma^*$ that represent the identity in $G$. When viewed as formal languages, this gives a strong connection between classes of groups and classes of formal languages. For example, Anisimov showed that a group is finite if and only if its word problem is a regular language, and Muller and Schupp showed that a group is virtually-free if and only if its word problem is a context-free language. Above this, not much was known, until Salvati showed recently that the word problem of $\mathbb{Z}^2$ is a multiple context-free language, giving first such example. We generalize Salvati's result to show that the word problem of $\mathbb{Z}^n$ is a multiple context-free language for any $n$. 
\end{abstract}

\maketitle

\section{Introduction}

Fixing an alphabet set $\Sigma$, the set of finite strings in $\Sigma$ is denoted by $\Sigma^* = \{ a_1a_2\cdots a_n \mid n \in \mathbb{N}, a_i \in \Sigma\}$. Any subset of $\Sigma^*$ is called a \emph{formal language}. As an attempt to analyze natural languages, Chomsky \cite{Ch56} introduced the Chomsky-Sch\"utzenberger hierarchy, which includes \emph{regular}, \emph{context-free}, \emph{context-sensitive}, and \emph{recursively enumerable} languages, each one a larger class than the former. In this paper, we will be focusing on \emph{multiple context-free languages}, another generalization of context-free languages introduced by Seki et al \cite{Se91}.

There are many ways to associate formal languages to a group. Indeed, for a group $G$ generated by a finite set $\Sigma = \Sigma^{-1}$, we define its \emph{word problem} $W(G)$ to be the subset of $\Sigma^*$ consisting of all strings that represent the identity element of $G$. Note that the language (and even the set of alphabets) depends on the choice of generating set. However, we will see that this dependence is negligible for the languages we are interested in here.

There had been various work associating classes of languages and classes of groups via word problems. An{\= \i}s{\=\i}mov \cite{An71} showed that $G$ is finite if and only if $W(G)$ is regular. Muller and Schupp showed that $G$ is virtually free if and only if $W(G)$ is context-free. Indeed, they showed the following stronger theorem:

\begin{theorem}[Muller, Schupp \cite{Mu83}]
For a finitely generated group $G$, the following are equivalent:
\begin{enumerate}
\item $G$ is virtually free.
\item $W(G)$ is context-free.
\item There exists a constant $K$ such that every closed path in the Cayley graph of $G$ can be triangulated by chords with length $\le K$.
\end{enumerate}
\end{theorem}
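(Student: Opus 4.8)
The plan is to prove the cycle of implications $(1)\Rightarrow(2)\Rightarrow(3)\Rightarrow(1)$. Before starting I would record that the statement ``$W(G)$ is context-free'' does not depend on the finite generating set: if $\Sigma_1,\Sigma_2$ are two such sets, the monoid homomorphism $\phi\colon\Sigma_1^*\to\Sigma_2^*$ sending each letter to a word representing the same group element satisfies $W_1(G)=\phi^{-1}(W_2(G))$, and context-free languages are closed under inverse homomorphism; so each of $(1),(2),(3)$ is genuinely a property of $G$.

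For $(1)\Rightarrow(2)$: a free group of finite rank has context-free word problem, since a pushdown automaton can maintain a freely reduced word on its stack, pushing or cancelling each input letter and accepting exactly when the stack empties. It then remains to show that the class of groups with context-free word problem is closed under finite-index overgroups. Given $H\le G$ of finite index with $W(H)$ context-free, fix a right transversal and record, for each generator of $G$ and each coset, the resulting coset together with the Schreier element of $H$; reading a word over $\Sigma_G$ left to right from the trivial coset is then a finite-state (rational) transduction producing a word over a finite generating set of $H$, and $W(G)$ is the set of inputs whose image lies in $W(H)$ and which return to the trivial coset. Closure of context-free languages under inverse rational transductions and intersection with regular languages finishes this implication, since ``virtually free'' means exactly ``contains a free subgroup of finite index.''

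For $(2)\Rightarrow(3)$: take a context-free grammar for $W(G)$ in Chomsky normal form and discard every nonterminal that is not both reachable from the start symbol and productive. Each surviving nonterminal $A$ then has a well-defined value $\overline A\in G$: if $A\Rightarrow^* u$ and $A\Rightarrow^* v$, picking $x,y$ with $S\Rightarrow^* xAy$ gives $xuy,xvy\in W(G)$, so $u=_G v$. Set $K=\max_A|\overline A|$, a finite maximum over the finitely many nonterminals. For any closed path in the Cayley graph, i.e.\ any $w\in W(G)$, a derivation tree of $w$ is binary-branching, and its branching structure triangulates the disk bounded by the path; each chord joins vertices $v_i,v_j$ of the path with $v_i^{-1}v_j=\overline A$ for the nonterminal $A$ spanning the corresponding arc, and so has length at most $K$. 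This is precisely condition $(3)$.

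For $(3)\Rightarrow(1)$ --- the substantive direction --- the idea is that a uniform bound $K$ on triangulating chords forces the Cayley graph to be tree-like. If $G$ is finite we are done, so assume it is infinite; I would use the $K$-triangulation to exhibit, outside a ball of suitable radius, a cut set of bounded size separating the Cayley graph into at least two infinite components, i.e.\ show $G$ has more than one end. By Stallings' theorem $G$ then splits as an amalgam or HNN extension over a finite subgroup, the vertex groups again satisfy a triangulability bound of the same controlled form, and one iterates. The hard part will be proving that this sequence of splittings terminates: this is (a form of) Dunwoody's accessibility theorem, the genuinely difficult core, which in the original Muller--Schupp argument is handled by a direct induction keeping explicit control of the constant $K$. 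Once termination is in hand, $G$ is the fundamental group of a finite graph of finite groups, hence acts on a tree with finite vertex and edge stabilizers and finite quotient, and therefore has a free subgroup of finite index, giving $(1)$. So the obstacles worth flagging are the transduction bookkeeping in $(1)\Rightarrow(2)$ --- routine but fiddly --- and the accessibility/termination step in $(3)\Rightarrow(1)$, which is where essentially all of the difficulty resides.
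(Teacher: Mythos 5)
This theorem is quoted in the paper from Muller and Schupp \cite{Mu83} purely as background, with no proof supplied, so there is no internal argument to compare yours against; what you have written is the standard proof strategy for the Muller--Schupp theorem. The two implications you treat in detail are essentially right: $(1)\Rightarrow(2)$ via a pushdown automaton holding a freely reduced word plus closure of the class of context-free groups under finite-index overgroups (the transversal/rational-transduction bookkeeping is the usual route, using closure of context-free languages under inverse rational transductions and intersection with regular sets), and $(2)\Rightarrow(3)$ via a reduced Chomsky normal form grammar, the observation that each useful nonterminal carries a well-defined element of $G$, and the fact that a binary derivation tree triangulates the closed path by chords labelled by those elements.

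For $(3)\Rightarrow(1)$ your outline is the correct one, but as written it has three genuine gaps beyond the one you flag. First, showing that an infinite $K$-triangulable group has more than one end is itself a substantive geometric argument in \cite{Mu83}, not a remark. Second, to invoke Dunwoody's accessibility theorem you must first know $G$ is finitely presented; this does follow from $(3)$ --- every relator is a consequence of the finitely many relators of length at most $3K$ arising from the triangles --- but it has to be said. Third, the assertion that the vertex groups of the Stallings splitting ``again satisfy a triangulability bound'' needs justification; the clean way is to note that finitely generated subgroups of a context-free group are context-free (inverse homomorphism), so they satisfy $(2)$ and hence $(3)$ by the implication you already proved. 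Also, historically Muller and Schupp proved the theorem assuming accessibility and Dunwoody's later theorem completed it, so attributing the termination step to a ``direct induction'' in the original paper is not accurate. With these caveats, your proposal is a faithful outline of the known proof rather than a self-contained argument.
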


Recently, Salvati \cite{Sa15} showed that the word problem of $\Z^2$ is a 2-multiple context-free language, giving first such example. In the proof, he used some geometric methods that seems to be specific to the two-dimensional case. Naderhof \cite{Ne16} gave a shorter proof of Salvati's result, avoiding some of the pieces that are two-dimensional specific. In this paper, by using a topological lemma from Burago \cite{Bu92}, we also avoid the geometric pieces and are able to generalize Silvati's theorem and show that the word problem of $\Z^n$ is multiple context-free:

\begin{theorem}
The word problem $W(\Z^n)$ is an $(8\left[\frac{n+1}{2}\right]-2)$-multiple context-free language for every $n$.
\end{theorem}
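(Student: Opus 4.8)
The plan is to reduce to even $n$, then realize $W(\Z^{2k})$ by a grammar built one $\Z^2$-factor at a time, using Burago's lemma to keep the number of string components bounded by $8k-2$.

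First I would collect the closure properties of multiple context-free languages that the argument needs: closure under finite union, under homomorphic and inverse-homomorphic images, and --- the key one --- under intersection with regular languages. Together with the standard observation (already flagged in the introduction) that membership of $W(G)$ in a fixed class $\mathcal{C}$ closed under inverse homomorphism does not depend on the finite generating set, this lets me fix $\Sigma=\{a_1^{\pm 1},\dots,a_n^{\pm 1}\}$. Now $W(\Z^n)=W(\Z^{n+1})\cap K_n$, where $K_n$ is the regular language of words over $\Sigma$ not using $a_{n+1}^{\pm1}$, and $\left[\frac{n+1}{2}\right]=\left[\frac{n+2}{2}\right]$ when $n$ is odd; so, since $n+1$ is even in that case, it suffices to prove that for even $n=2k$ the language $W(\Z^{2k})$ is an $(8k-2)$-multiple context-free language.

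I would prove the even case by induction on $k$ (equivalently: by processing the $k$ coordinate-pairs one at a time). The base $k=1$, that $W(\Z^2)$ is a $6$-MCFL, is immediate from Salvati's theorem, since every $2$-MCFL is a $6$-MCFL. For the inductive step, assume $W(\Z^{2k-2})$ is an $(8k-10)$-MCFL and split off the last coordinate pair, writing $\Sigma=\Sigma'\sqcup\{c^{\pm1},d^{\pm1}\}$. A word $w$ lies in $W(\Z^{2k})$ precisely when the projection $w|_{\Sigma'}$ obtained by deleting $c^{\pm1},d^{\pm1}$ lies in $W(\Z^{2k-2})$ and the projection $w|_{\{c,d\}}$ lies in $W(\Z^2)$; equivalently, $w$ is an interleaving of some $u\in W(\Z^{2k-2})$ with some $v\in W(\Z^2)$. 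Since multiple context-free languages are not closed under shuffle in general, one cannot simply interleave a derivation of $u$ from the inductive grammar $G'$ with a derivation of $v$; the content of the step is that one never needs an arbitrary interleaving.

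This is where I would invoke Burago's topological lemma. Reading $w$ as a null-homotopic lattice loop in $\R^{2k}$, the lemma should be used to produce a \emph{bounded-complexity} witness of triviality: after choosing a suitable filling, the $\Z^2$-part $v$ decomposes into at most four arcs, and these four arcs can be inserted into at most four of the $\le 8k-10$ string components of a nonterminal of $G'$ for $u$ --- each insertion replacing one component $u_i$ by the three strings $u_i',v_j,u_i''$ --- so that the resulting list of $\le (8k-10)+2\cdot 4=8k-2$ strings concatenates, in one of finitely many fixed patterns, to $w$. The $(8k-2)$-MCFG $G$ for $W(\Z^{2k})$ then has finitely many nonterminals, each recording a nonterminal of $G'$, a choice of which (at most four) of its components are split, and a bounded amount of bookkeeping describing the at-most-four arcs of $v$ together with the way they cyclically close up into a genuine $\Z^2$-loop; its productions are exactly the concatenations prescribed by the finitely many patterns, and a routine count shows the component list never exceeds length $8k-2$. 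Soundness is straightforward: each nonterminal's invariant certifies that the concatenation of its component strings represents a fixed group element determined by the bookkeeping, and the start symbol's element is the identity; completeness is exactly the assertion that Burago's lemma applies to every trivial $w$.

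The step I expect to be the main obstacle is making the previous paragraph precise: Burago's lemma is a continuous/metric statement about filling or rerouting lattice loops in $\R^{2k}$ up to bounded error, and extracting from it the exact combinatorial normal form used above --- that the $\Z^2$-projection breaks into at most four arcs, inserted at at most four places of a decomposition of the complementary $\Z^{2k-2}$-loop, with all bounds uniform in $k$ --- is where the real work lies, and it is also what pins down the constant $8$ (four arcs on the new side, four cuts on the old side). Two secondary difficulties are checking that refining $G'$'s tuples by the new cuts can be carried out using only finitely many new nonterminals, and verifying the completeness direction carefully, i.e.\ that every $w\in W(\Z^{2k})$ --- not merely those already presented in the normal form --- is generated.
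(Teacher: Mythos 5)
There is a genuine gap in the inductive step, and it is not merely the technical difficulty you flag at the end. Your plan is to peel off one $\Z^2$-factor at a time: write $w\in W(\Z^{2k})$ as an interleaving of $u=w|_{\Sigma'}\in W(\Z^{2k-2})$ with $v=w|_{\{c^{\pm1},d^{\pm1}\}}\in W(\Z^2)$, and assert that ``one never needs an arbitrary interleaving'' --- that $v$ breaks into at most four contiguous arcs inserted at at most four positions of a tuple for $u$. But $W(\Z^{2k})$ over $\Sigma'\sqcup\{c^{\pm1},d^{\pm1}\}$ is \emph{exactly} the full shuffle of $W(\Z^{2k-2})$ and $W(\Z^2)$: every interleaving of a trivial $\Sigma'$-word with a trivial $\{c,d\}$-word is again trivial, so the grammar must admit all of them. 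The word $(a_1ca_1^{-1}c^{-1})^N$ lies in $W(\Z^{2k})$ and its $\{c,d\}$-letters form $2N$ maximal runs; no architecture in which the $v$-part enters as at most four contiguous arcs spliced into boundedly many components of a derivation of $u$ (with only a bounded amount of extra bookkeeping in the nonterminals) can produce it for large $N$. The interleaving has to happen recursively at \emph{every} production, not once at the top, and Burago's lemma --- which finds $\left[\frac{n+1}{2}\right]$ subintervals of a single curve whose displacements sum to half the total displacement --- does not supply the coordinate-splitting normal form you are asking of it. (Your preliminary reduction of odd $n$ to $n+1$ via intersection with a regular language is fine, since $k$-MCFLs are closed under that operation and $\left[\frac{n+1}{2}\right]=\left[\frac{n+2}{2}\right]$ for odd $n$; but it is also unnecessary.)

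For contrast, the paper does not induct on the number of factors at all. It writes down a single grammar with one nonterminal $I$ of arity $m=8k-2$ whose only recursive rule recombines \emph{two} $m$-tuples into one via an arbitrary permutation of the $2m$ components, and proves completeness by induction on the \emph{length} of the word: Burago's lemma is applied to the full $n$-dimensional curves traced by the first and second halves of the tuple, producing $2k$ cut points on each half so that the resulting refinement admits a two-coloring into two trivial sub-tuples, each strictly shorter and each of size at most $(5k-1)+(3k-1)=8k-2$ (after a separate lattice-point adjustment of the cut points). That balanced divide-and-conquer is where the constant comes from; your ``four arcs plus four cuts'' count reproducing the number $8$ is a coincidence rather than the mechanism. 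If you want to keep a factor-by-factor structure you would need a fundamentally different device for handling unbounded interleaving, and the shuffle example above shows that bounded bookkeeping cannot provide it.
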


However, this yields that $W(\Z^2)$ is a $6$-multiple context-free language, which is slightly weaker than Salvati's result which says that $W(\Z^2)$ is a $2$-multiple context-free language. Thus, one natural question will be improving our result to get the following conjecture proposed by Nederhof:

\begin{conjecture}[Nederhof \cite{Ne17}]
For any $n$, the language $W(\Z^n)$ is $n$-multiple context-free.
\end{conjecture}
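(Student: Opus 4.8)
The plan is to construct, for each $n$, an explicit $n$-multiple context-free grammar whose generated language is exactly $W(\Z^n)$, so that the dimension matches the coordinate count. I would start from the observation that, since $\Z^n$ is abelian and the letters $a_1^{\pm 1}, \ldots, a_n^{\pm 1}$ attached to distinct coordinates are distinct, a word $w$ lies in $W(\Z^n)$ if and only if for each coordinate $j$ the projection $\pi_j(w)$ onto the $a_j^{\pm 1}$-letters is balanced, i.e.\ $\pi_j(w) \in W(\Z)$. Thus $W(\Z^n)$ is precisely the shuffle of the $n$ single-pair balanced languages over pairwise disjoint alphabets, and the target bound ``$n$'' should be read as ``one tracked strand per coordinate.'' The grammar's nonterminals will have dimension $k \le n$, and a $k$-tuple $(x_1,\ldots,x_k)$ of subwords will be declared derivable exactly when the $x_i$ are vertex-disjoint subpaths of a single lattice loop that glue, along a prescribed cyclic pattern, into a closed loop. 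Soundness then records that the abelianized sum $\sigma(x_1)+\cdots+\sigma(x_k)=0$ in $\Z^n$, where $\sigma$ is the exponent-sum map; the start symbol has dimension $1$ and yields the full word.

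The heart of the argument is a decomposition lemma that sharpens the role Burago's topological lemma plays in the weaker theorem. The general lemma cuts a closed loop into on the order of $4n$ fragments, which is what inflates the bound to $8\left[\frac{n+1}{2}\right]-2$; here one must instead exploit the specific combinatorics of $\Z^n$ to fill every loop while keeping only $n$ fragments open at once. I would attempt this by a sweepline argument: sweep a hyperplane orthogonal to the last coordinate across the loop and maintain the at-most-$n$ active partial paths that the sweep currently meets, merging two fragments into one whenever their gluing endpoints coincide and spawning at most one new fragment per coordinate direction encountered. The combining productions of the grammar then mirror exactly these two moves, namely concatenating two components whose endpoints match and absorbing a length-two backtrack $a_j a_j^{-1}$; each such move is a linear, non-deleting operation on tuples and never raises the dimension above $n$. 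Completeness reduces to showing that the sweep of any loop in $W(\Z^n)$ factors into such moves, and soundness to checking that each move preserves the property of gluing to a closed loop.

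The main obstacle is precisely forcing the fragment count down to the optimal value $n$. A naive coordinate-by-coordinate induction---assume $W(\Z^{n-1})$ is $(n-1)$-multiple context-free and thread the $a_n^{\pm 1}$-letters through an existing derivation---overshoots, because the pending $a_n$-letters at the two ends of each subtree demand, in general, two extra components rather than one, so the dimension grows like $2n$; this is the same inefficiency the proved theorem tolerates. Arranging the sweep so that at most $n$ fragments are ever simultaneously open, by merging aggressively enough, is the real content, and it is where the abelian commutation structure of $\Z^n$ must be used in an essential, non-generic way. I expect this to be the step that genuinely separates the conjecture from the theorem already established, and the reason the sharp bound remains open. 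A secondary, more technical difficulty is checking that the prescribed cyclic gluing pattern on the $k$ active strands can always be realized by productions that concatenate components only in a fixed left-to-right order, which may force one to permit reversal of individual components or to bound how the strands are permuted during merging.
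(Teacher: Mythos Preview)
The statement you are attempting to prove is Conjecture~3 in the paper, attributed to Nederhof, and the paper does \emph{not} prove it. The paper's main theorem only establishes the weaker bound $8\left[\frac{n+1}{2}\right]-2$, and explicitly leaves the sharp bound $n$ open. So there is no ``paper's own proof'' to compare against.

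That said, your proposal is not a proof either, and you are candid about this. Your sweepline idea---sweeping a hyperplane orthogonal to the last coordinate and tracking at most $n$ active fragments---is not substantiated: a generic lattice loop in $\Z^n$ can cross a hyperplane far more than $n$ times, so the number of ``active partial paths'' the sweep meets is not bounded by $n$ in any obvious way. You assert that one should ``merge aggressively enough,'' but give no mechanism for doing so while preserving the invariant that the fragments glue to a closed loop. This is precisely the gap you yourself flag in the third paragraph, and it is the entire content of the conjecture. The coordinate-by-coordinate induction you sketch and then reject does indeed overshoot, as you say; but the replacement you propose is not an argument, only a hope.

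In short: the paper contains no proof of this statement because it is open, and your proposal does not close it. The honest summary of your write-up is that it identifies where the difficulty lies (getting from $O(n)$ fragments down to exactly $n$) without offering a way past it.
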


In the same paper, Nederhof also showed that $W(\Z^n)$ is not a $K$-multiple context-free language for any $K<n$, so this conjecture is indeed sharp. 

To end the introduction, we propose the following open problem that may be interesting to group theorists. For more recent progress on this topic, the reader may also want to see \cite{Gi} and \cite{Kr}.

\begin{question}
For which finitely-generated groups $G$ is $W(G)$ multiple context-free?
\end{question}

This paper is organized as follows: We give the necessary definition and background in section \ref{dab}. We then set up our notation and introduce the main lemma in section \ref{naml}. Finally we show the main Theorem in section \ref{mt}.

\section{Definition and background}\label{dab}

We follow \cite{Se91} and \cite{Ne17} to give the definition of multiple context-free grammars:

\begin{definition}
A \emph{$K$-multiple context-free grammar} is a tuple $(\Sigma, N, S, P)$, where $\Sigma$ is the set of \emph{terminals}, $N$ is the set of \emph{non-terminals}, $S\in N$ is the \emph{start symbol}, and $P$ is the set of \emph{production rules}, such that:
\begin{itemize}
\item Each non-terminal $I(x_1, x_2,\ldots,x_m)$ has an associated natural number $m \le K$, its \emph{arity}.
\item The arity of $S$ is 1.
\item Each of the production rule is of the following form for some $p\ge 0$:
$$ I_0(x_1,x_2,\ldots,x_{m_0}) \leftarrow I_1(y_1^1,y_2^1,\ldots,y_{m_1}^1), I_2(y_1^2,\ldots,y_{m_2}^2), \cdots, I_p(y_1^p,\ldots,y_{m_p}^p) $$
where $I_i$ has arity $m_i$, each $x_i$ is a word in the variables $y_j^l$ and the terminals $\Sigma$, and each of $y_j^l$ shows up at most once in all of $x_1,x_2,\ldots, x_n$.
\end{itemize}

An \emph{instance} of a production rule or non-terminal is obtained by replacing the occurance(s) of each variable by a string in $\Sigma^*$. A sequence of rule instances is \emph{valid} if every non-terminal instances on the right hand side of some rule instances in the sequence already appears on the left hand side of some rule instances earlier in the sequence. We say a sequence \emph{ends} in a non-terminal instance if it is the left hand side of the last rule instance in the sequence.

We say a grammar \emph{admits} a string $s$ in $\Sigma^*$ if there is a valid sequence of rule instances, called a \emph{derivation}, ending in $S(s)$. We say a grammar \emph{produces} the language which is the collection of all strings in $\Sigma^*$ that it admits.
\end{definition}

\begin{definition}
A language is called \emph{$K$-multiple context-free} if it can be produced by some $K$-multiple context-free grammar. A language is called \emph{multiple context-free} if it is $K$-multiple context-free for some $K$.
\end{definition}

\begin{example}
Consider the 2-multiple context-free grammar with terminals $a, b, c, d$, a binary non-terminal $I = I(x,y)$, the start symbol $S = S(x)$, and production rules
$$ I(\epsilon, \epsilon) \leftarrow $$
$$ I(axb,cyd) \leftarrow I(x,y) $$
$$ S(xy) \leftarrow I(x,y).$$

Then the following is a derivation for $a^nb^nc^nd^n$:
$$ I(\epsilon, \epsilon) \leftarrow $$
$$ I(ab,cd) \leftarrow I(\epsilon, \epsilon) $$
$$ I(a^2b^2,c^2d^2) \leftarrow I(ab,cd) $$
$$\vdots$$
$$ I(a^nb^n,c^nd^n) \leftarrow I(a^{n-1}b^{n-1},c^{n-1}d^{n-1}) $$
$$ S(a^nb^nc^nd^n) \leftarrow I(a^nb^n,c^nd^n)$$

Indeed, this is the only kind of strings this grammar admits, so the language produced by this grammar is $\{ a^nb^nc^nd^n \mid n \ge 0\}$.

\end{example}

The class of multiple context-free languages is a cone \cite{Se91} in the sense of \cite{Gi}. A well-known consequence of being a cone is the following theorem from \cite{Gi}, which we only state in the case of multiple context-free languages:

\begin{theorem}\label{finite-index}
Let $W(G)$ be the word problem of $G$ with respect to a fixed generating set $\Sigma$. Suppose $W(G)$ is multiple context-free. Then
\begin{enumerate}
\item The word problem of $G$ is multiple context-free with respect to any generating set of $G$.
\item The word problem of any finitely-generated subgroup of $G$ is multiple context-free.
\item The word problem of any supergroup of $G$ of finite index is multiple context-free.
\end{enumerate}
\end{theorem}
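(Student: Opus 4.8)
The statement is presented as a standard consequence of the fact, established in \cite{Se91}, that the class of multiple context-free languages is a cone — a full trio, i.e. a family of languages closed under homomorphisms, inverse homomorphisms, and intersection with regular languages (equivalently, closed under rational transductions). The plan is to take this closure property as given and, for each of the three items, exhibit the target word problem as an (inverse) homomorphic image or rational-transduction image of $W(G,\Sigma)$.

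For (2), let $H = \langle T\rangle \le G$ with $T = T^{-1}$ finite. Choose for each $t \in T$ a word $w_t \in \Sigma^*$ with $w_t =_G t$, and let $\psi\colon T^* \to \Sigma^*$ be the monoid homomorphism with $t \mapsto w_t$. A word $u = t_1\cdots t_m \in T^*$ lies in $W(H,T)$ iff $t_1\cdots t_m = 1$ in $H$, iff the same product is trivial in $G$, iff $\psi(u) = w_{t_1}\cdots w_{t_m} \in W(G,\Sigma)$; hence $W(H,T) = \psi^{-1}(W(G,\Sigma))$, and closure under inverse homomorphisms finishes the argument. Item (1) is the special case $H = G$ with $T$ a second generating set: verbatim, $W(G,\Sigma') = \phi^{-1}(W(G,\Sigma))$ where $\phi\colon (\Sigma')^* \to \Sigma^*$ sends each $a \in \Sigma'$ to a chosen $\Sigma$-word representing it.

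For (3), let $[\hat G : G] = k$; as a finite-index overgroup of a finitely generated group, $\hat G$ is finitely generated, say $\hat G = \langle \hat\Sigma\rangle$ with $\hat\Sigma = \hat\Sigma^{-1}$. Fix right-coset representatives $1 = g_1,\dots,g_k$ of $G$ in $\hat G$ and, for $i \le k$ and $a \in \hat\Sigma$, define $\sigma(i,a) \le k$ and $h_{i,a} := g_i\, a\, g_{\sigma(i,a)}^{-1} \in G$ by $g_i a \in G\, g_{\sigma(i,a)}$; the $h_{i,a}^{\pm 1}$ form a finite symmetric generating set $T$ of $G$ (Reidemeister–Schreier). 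Let $\tau$ be the finite-state transducer that reads $w = a_1\cdots a_m \in \hat\Sigma^*$, starts in state $1$, from state $i$ on letter $a$ emits $h_{i,a}$ and moves to state $\sigma(i,a)$, and accepts exactly when its final state is $1$. Telescoping $g_1 a_1\cdots a_m$ through these transitions shows $w =_{\hat G} 1$ if and only if $\tau$ accepts $w$ and the word it emits lies in $W(G,T)$; equivalently $W(\hat G,\hat\Sigma) = \tau^{-1}(W(G,T))$. Since $W(G,T)$ is multiple context-free by item (1) and the class is closed under rational transductions, so is $W(\hat G,\hat\Sigma)$, and item (1) once more lets us pass to any desired generating set of $\hat G$.

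The only step with genuine content is (3): one must check that each $h_{i,a}$ really lies in $G$, that these Schreier generators generate $G$, and — the crux — that the telescoping identity correctly translates ``$w$ represents $1$ in $\hat G$'' into the conjunction ``the coset index returns to $1$'' and ``the emitted $G$-word is trivial'', i.e. that the run of $\tau$ is exactly the Reidemeister–Schreier rewriting of $w$. Items (1) and (2) are purely formal once the cone property is in hand, and it is precisely (1) that makes all three conclusions independent of the auxiliary choices of generating sets.
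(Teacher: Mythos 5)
The paper gives no proof of this theorem at all — it simply states it as "a well-known consequence of being a cone" and cites \cite{Gi}, so your write-up is precisely the standard fleshing-out of the intended argument: inverse homomorphisms for items (1) and (2), and a Reidemeister--Schreier rational transduction for item (3), all justified by the closure of multiple context-free languages under rational transductions. The argument is correct, including the telescoping identity that reduces triviality in $\hat G$ to the conjunction of coset-return and triviality of the emitted word in $G$.
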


\section{Notations and main lemma}\label{naml}

For any curve $\alpha$ in $\mathbb{R}^n$, we will write $\overline{\alpha}$ to be the displacement vector of $\alpha$, and $|\alpha|$ to be the length of $\alpha$. Note that when $\alpha$ is the curve on the Cayley graph of $\Z^n$ corresponding to a word $w$, $\overline{\alpha}$ is just the image of the word $\overline{w}$ in $\Z^n$ and $|\alpha|$ is just the length of the word. We also write $\alpha\beta$ to mean the concatenation of the two curves $\alpha$ and $\beta$.

We will need the following lemma by Burago to prove the main theorem. We include the proof here for completeness.

\begin{lemma}[Burago \cite{Bu92}]\label{main lemma}
Let $\alpha(t):[0,1]\to \mathbb{R}^n$ be a continuous curve. Then there exist $[t_1,s_1], [t_2,s_2], \ldots,[t_k,s_k]$ such that $0\le t_1\le s_1 \le t_2\le s_2\le  \cdots\le t_k\le s_k\le 1$ with $k= [\frac{n+1}{2}]$ and 
$$\sum\limits_{i=1}^k (\alpha(s_i)-\alpha(t_i)) = \frac{1}{2}(\alpha(1)-\alpha(0)).$$
\end{lemma}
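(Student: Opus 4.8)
The plan is to set up an auxiliary continuous map from a compact space into $\mathbb{R}^n$ and then apply a Borsuk--Ulam-type argument to find a zero, which will translate into the desired collection of intervals. Concretely, I would fix $k = [\frac{n+1}{2}]$ and consider the space of ways to choose $k$ subintervals $[t_1,s_1] \le \cdots \le [t_k,s_k]$ inside $[0,1]$; after a suitable reparametrization, such a choice is encoded by a point in a sphere $S^{2k-1}$ (or a closely related quotient), where the ``angular'' coordinates record the lengths $s_i - t_i$ of the chosen intervals and the gaps between them, and the antipodal map swaps the role of ``chosen'' intervals with ``gap'' intervals. The key observation is that $2k - 1 \ge n$, so there is enough room.

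The central step is to define $f : S^{2k-1} \to \mathbb{R}^n$ by $f(\text{choice}) = \sum_{i=1}^k (\alpha(s_i) - \alpha(t_i)) - \frac{1}{2}(\alpha(1) - \alpha(0))$, arranged so that $f$ is \emph{odd}: when we pass to the antipodal point, the sum over the chosen intervals becomes the sum over the complementary (gap) intervals, which telescopes to $(\alpha(1) - \alpha(0)) - \sum_{i=1}^k (\alpha(s_i) - \alpha(t_i))$, and subtracting the same constant $\frac{1}{2}(\alpha(1)-\alpha(0))$ turns this into $-f$. Once oddness is checked, the Borsuk--Ulam theorem (in the form: an odd continuous map $S^{2k-1} \to \mathbb{R}^n$ with $2k - 1 \ge n$ must have a zero) produces a point where $f = 0$, which is exactly the asserted identity. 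I would need to handle the degenerate configurations — intervals collapsing to points ($t_i = s_i$) or becoming adjacent ($s_i = t_{i+1}$) — carefully so that $f$ extends continuously over all of $S^{2k-1}$, including the ``poles'' of the parametrization; allowing $t_i = s_i$ is harmless since such terms contribute $0$.

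The main obstacle I anticipate is getting the parametrization of interval-tuples by a sphere exactly right so that the antipodal symmetry genuinely corresponds to the chosen/gap swap \emph{and} the map is well-defined and continuous at the boundary strata where the combinatorial type of the configuration changes. One clean way to do this: represent a configuration by a unit vector $(u_0, u_1, \ldots, u_{2k}) \in S^{2k}$ (or a point of $S^{2k-1}$ after fixing a total-length normalization) whose squared coordinates $u_j^2$ are the successive lengths $g_0, \ell_1, g_1, \ell_2, \ldots, \ell_k, g_k$ of alternating gaps $g_j$ and chosen intervals $\ell_i$ summing to $1$; then $t_i$ and $s_i$ are partial sums of these, the map $f$ is manifestly continuous in the $u_j$, and the sign-flip $u_j \mapsto \pm u_j$ can be used to realize the needed involution. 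Verifying that this involution is fixed-point-free (or reducing to a genuine $S^{2k-1}$ on which it is) and that $2k-1 \ge n$ gives the dimension count is then routine, and the conclusion follows from Borsuk--Ulam.
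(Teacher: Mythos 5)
Your overall strategy (encode interval configurations by squared coordinates on a sphere, build an odd map, apply Borsuk--Ulam) is the right one and is essentially what the paper does, but two concrete things in your setup break. First, the dimension count: with $k=\left[\frac{n+1}{2}\right]$ you have $2k-1\ge n$ only for odd $n$; for $n=2$ you get $k=1$ and $2k-1=1<2$, so Borsuk--Ulam on $S^{2k-1}$ gives nothing. Second, and more seriously, the antipodal symmetry you want does not exist in your parametrization. If a configuration is determined by the squared coordinates $u_j^2$ (the lengths $g_0,\ell_1,g_1,\ldots,\ell_k,g_k$), then the antipodal map $u\mapsto -u$ fixes every configuration, so your $f$ satisfies $f(-u)=f(u)$: it is even, not odd. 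Nor can ``swap chosen intervals with gaps'' be realized as an involution of your configuration space at all, since it would turn $k$ chosen intervals into $k+1$. You flag this as the main obstacle and call its resolution routine, but it is precisely the point where the proof lives, and the squared-coordinate encoding as you set it up forecloses it.

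The fix, which is the paper's construction, is to stop pre-assigning which intervals are ``chosen'': take $x=(x_1,\ldots,x_{n+1})\in S^n$, let $y_i=x_1^2+\cdots+x_i^2$ cut $[0,1]$ into $n+1$ consecutive intervals, and let the \emph{sign} of $x_i$ decide whether the $i$-th increment is added or subtracted:
$$f(x)=\sum_{i=1}^{n+1}\operatorname{sign}(x_i)\bigl(\alpha(y_i)-\alpha(y_{i-1})\bigr).$$
This is continuous (when $x_i\to 0$ the increment $\alpha(y_i)-\alpha(y_{i-1})\to 0$, absorbing the discontinuity of $\operatorname{sign}$) and genuinely odd, since $y_i$ is even in $x$ while every sign flips. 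A zero of $f$ forces the positive-sign increments and the negative-sign increments to have equal sums, and since they total $\alpha(1)-\alpha(0)$, each sum is half of it. The bound $k=\left[\frac{n+1}{2}\right]$ then comes out at the \emph{end}, not from the parametrization: of the two sign classes among the $n+1$ intervals, one contains at most $\left[\frac{n+1}{2}\right]$ of them, and you take that class (padding with empty intervals if needed). Your write-up is missing both this sign mechanism and the final pigeonhole step that produces $k$.
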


\begin{proof}
We will construct an odd map $f:S^n\to\mathbb{R}^n$. For $\overline{x} = (x_1,x_2,\ldots,x_{n+1})\in S^n \subset \mathbb{R}^n$, we write $y_0 = 0$, $y_1 = x_1^2$, $y_2 = x_1^2+x_2^2$, \ldots, $y_{n+1} = x_1^2 + \cdots + x_{n+1}^2 = 1$. We define $$f(\overline{x}) = \sum\limits_{i=1}^{n+1}\operatorname{sign}(x_i)(\alpha(y_i) - \alpha(y_{i-1})).$$

One checks that $f$ is indeed an odd map, and since every odd map from $S^n \to \mathbb{R}^n$ has a zero, we can pick some $\overline{x}\in S^n$ such that $f(\overline{x}) = 0$. Then one of $\{ [y_{i-1},y_i]\mid x_i>0\}$ and $\{ [y_{i-1},y_i]\mid x_i<0\}$ satisfies the condition, by possibly attaching some empty intervals in the end.
\end{proof}

\begin{remark}
One sees that $k= [\frac{n+1}{2}]$ is sharp by considering the curve along the edges of an $n$-cube.
\end{remark}

\section{Proof of the main Theorem}\label{mt}

We restate and prove the main Theorem. To make the proof clearer, we shall suppress the embedding when talking about curves.

\setcounter{theoremm}{1}
\begin{theoremm}
The word problem $W(\Z^n)$ is an $(8\left[\frac{n+1}{2}\right]-2)$-multiple context-free language for every $n$.
\end{theoremm}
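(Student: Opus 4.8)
The plan is to build a $(8[\frac{n+1}{2}]-2)$-multiple context-free grammar whose terminals are the generators $\Sigma = \{e_1^{\pm 1},\ldots,e_n^{\pm 1}\}$ of $\Z^n$ and whose produced language is exactly $W(\Z^n)$. The central idea is to iterate Lemma \ref{main lemma}: a word $w$ lies in $W(\Z^n)$ iff the associated closed curve $\alpha$ in the Cayley graph satisfies $\overline{\alpha}=0$. Apply Burago's lemma to $\alpha$ to split it into $k=[\frac{n+1}{2}]$ subintervals (subwords) whose total displacement is $\frac12\overline{\alpha}$, and a complementary family of subwords accounting for the other half. After recording the $k$ ``positive'' pieces and the $k$ (or fewer) ``negative'' pieces, the two halves have equal displacement, so one can recurse on the difference; the recursion halves the displacement each time, hence after finitely many steps the displacement vector has entries smaller than the length allows and must be zero, at which point the subwords can be checked directly. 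The non-terminals will carry a bounded number of subword-slots — roughly $2k$ pieces from the Burago split, plus a constant number of bookkeeping slots to reassemble them in the right cyclic order and to track the two halves being compared — which is where the $8k-2 = 8[\frac{n+1}{2}]-2$ arity bound comes from.

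Concretely, I would introduce a ``splitting'' non-terminal of arity about $4k$ that, given a subword $u$ with a prescribed displacement $v$, guesses the Burago decomposition of $u$ into $2k$ consecutive blocks $u = u_1 u_1' u_2 u_2' \cdots u_k u_k'$ where $\sum \overline{u_i} = \tfrac12 v$ and $\sum \overline{u_i'} = \tfrac12 v$, and passes each collection of $k$ blocks down to two recursive copies of itself with target displacement $\tfrac12 v$. The production rules concatenate the returned refined blocks back in order. The base case handles subwords whose length is below a fixed bound (so membership in $W$ can be decided by a finite lookup) or whose target displacement is $0$ and which are short enough to be verified directly. The start symbol $S(x)$ fires a rule $S(x) \leftarrow A(x)$ where $A$ is the splitting non-terminal with target displacement $0$; correctness in one direction is immediate from Lemma \ref{main lemma} applied recursively, and in the other direction every derivation produces a word whose displacement telescopes to $0$.

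The first genuine obstacle is bounding the arity: Burago's lemma produces intervals $[t_i,s_i]$ that are \emph{nested inside} $[0,1]$ with gaps between them, so the ``positive'' part of $\alpha$ is a union of up to $k$ arcs and its complement (the ``negative'' part carrying the other half-displacement) is a union of up to $k+1$ arcs; the grammar must store all these arcs as separate variable slots and remember how to interleave them to recover $\alpha$, and doing this with only $8k-2$ slots — rather than a number that grows with recursion depth — requires that each recursive call \emph{returns} its pieces already concatenated into the original sub-arcs it was handed, so the slot count does not accumulate. The second obstacle, more subtle, is termination and correctness of the recursion on the displacement vector: I must argue that halving a vector in $\Z^n$ repeatedly, while at each stage the relevant subword has length at least $\|v\|_1$, forces $v=0$ after $O(\log(\text{length}))$ steps, and — crucially — that the grammar need not ``know'' this bound, i.e. the base-case rules can simply be offered as alternatives at every level and a valid derivation will exist precisely when $w \in W(\Z^n)$. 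Handling the parity issue when $\tfrac12 v$ is not integral (which cannot happen since we start at $v=0$ and only ever halve $0$, but must be checked after the first split where the two halves of a nonzero original word need not individually have integer-coordinate displacement — in fact they do, since displacements of subwords are always in $\Z^n$, so the constraint is simply $\sum\overline{u_i} = \sum\overline{u_i'}$, each in $\Z^n$, summing to $v$) is a routine but necessary verification. I expect the arity bookkeeping to be the main technical grind.
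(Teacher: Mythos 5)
Your high-level plan --- encode membership in $W(\Z^n)$ via a non-terminal holding a bounded tuple of subwords, use Burago's lemma to split, and recurse --- is the right one, but the engine you propose for making the recursion terminate does not work, and this is the crux of the proof. You recurse on ``halving the target displacement $v$,'' yet as you yourself observe, $v$ is $0$ at the start and stays $0$ forever; there is nothing to halve, so this gives no progress measure. The actual progress measure must be word length, and here your scheme breaks: applying Lemma \ref{main lemma} directly to a closed (zero-displacement) curve can return a degenerate split in which all the chosen intervals $[t_i,s_i]$ are trivial, so one ``child'' is the entire word and the other is empty, and the recursion stalls. The paper avoids exactly this by never applying Burago's lemma to a zero-displacement curve: it first cuts the $m$-tuple into its first and second halves, reduces to the case where both halves have \emph{nonzero} displacements $d$ and $-d$, applies the lemma to each half separately (so each of the two families inside each half has displacement $\pm d/2 \neq 0$ and is therefore nonempty), and then cross-pairs a family from the first half with a family from the second half to form each child. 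This guarantees both children have total displacement $0$ \emph{and} are strictly shorter, which is what drives the induction. The cross-pairing is also where the arity $8\left[\frac{n+1}{2}\right]-2$ actually comes from: refining the $m/2=4k-1$ slots of one half by the $2k$ Burago breakpoints yields at most $6k-1$ pieces split into two families of size at least $k$ each, and one can choose the pairing so that each child receives at most $(5k-1)+(3k-1)=8k-2$ pieces. Your ``roughly $2k$ plus constant bookkeeping'' does not reproduce this count.

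Two further gaps remain. First, Burago's lemma produces real parameters $t_i,s_i$, so the breakpoints need not be lattice points and the ``blocks'' need not be subwords; the paper needs two separate perturbation arguments (first to mid-lattice points, then to lattice points, preserving both the displacement identity and the non-degeneracy of both children) before the pieces can be fed to the grammar. Second, a target displacement $v$ cannot be a parameter of a non-terminal in an MCFG, since the non-terminal set must be finite; once you strip that out, what remains is precisely the single non-terminal $I(x_1,\dots,x_m)$ with invariant $\sum\overline{x_i}=0$, and the reassembly you worry about is handled by one production rule for each permutation interleaving the $2m$ child slots into the $m$ parent slots.
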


\begin{proof}
We will denote the standard generators of $\Z^n$ by $a_1, a_2, \ldots, a_n$, and their inverses by $A_1, A_2, \ldots, A_n$.

Let $k = \displaystyle\left[\frac{n+1}{2}\right]$ and $m = \displaystyle 8k-2$. We consider the $m$-multiple context-free grammar with terminals $\Sigma = \{ a_1, A_1, a_2, A_2, \ldots, A_n \}$, an $m$-ary non-terminals $I = I(x_1,x_2,\ldots,x_m)$, start symbol $S = S(x)$, and production rules
$$S(x_1x_2\cdots x_m) \leftarrow I(x_1,x_2,\ldots,x_m)$$
$$I(\epsilon,\epsilon,\ldots,\epsilon)\leftarrow$$
$$I(a_1,A_1,\epsilon,\epsilon,\ldots,\epsilon)\leftarrow$$
$$I(a_2,A_2,\epsilon,\epsilon,\ldots,\epsilon)\leftarrow$$
$$\vdots$$
$$I(a_n,A_n,\epsilon,\epsilon,\ldots,\epsilon)\leftarrow$$
$$I(z_1,z_2,\ldots,z_m)\leftarrow I(x_1,x_2,\ldots,x_m), I(x_{m+1},x_{m+2},\ldots,x_{2m})$$
for every $\sigma\in \operatorname{Sym}(2m)$ and $z_1z_2\cdots z_m = x_{\sigma(1)}x_{\sigma(2)}\cdots x_{\sigma(2m)}$.

By the very first production rule, it suffices to show that $x_1x_2\cdots x_m \in W(\Z^n)$ if and only if there is a valid sequence ending in $I(x_1,x_2,\ldots,x_m)$. The if direction follows immediately from the definition of the production rules and commutativity of $\Z^n$. To show the only if direction, we will induct on the length of $x_1x_2\cdots x_m \in W(\Z^n)$. The base cases where $|x_1\cdots x_m|\leq m$ are straight forward. Now suppose $|x_1\cdots x_m| > m$. We may assume that $x_1x_2\cdots x_{m/2} \neq 0$ and $x_{m/2+1}x_{m/2+2}\cdots x_{m} \neq 0$, otherwise we may apply induction hypothesis on $x_1x_2\cdots x_{m/2} = 0$ and $x_{m/2+1}x_{m/2+2}\cdots x_{m} = 0$.

Now for every $i$, we write $v_i$ to be the curve in $\R^n$ that represents $x_i$, which is contained in the Cayley graph of $\Z^n$. Applying Lemma \ref{main lemma} to the curve $v_1\cdots v_{m/2}$, we get $2k$ points on the curve, breaking the curve into $2k+1$ pieces $u_0,u_1,\dots,u_{2k}$, with

\begin{align}
\sum\limits_{i = 1}^{i = k} \overline{u_{2i-1}} = \frac12\sum\limits_{i = 0}^{i = 2k} \overline{u_{i}} = \sum\limits_{i = 0}^{i = k} \overline{u_{2i}}.
\end{align}

Applying Lemma \ref{main lemma} to the curve $v_{m/2+1}\cdots v_{m}$ in the same way, we get 
\begin{align}
\sum\limits_{i = 1}^{i = k} \overline{u'_{2i-1}} = \frac12\sum\limits_{i = 0}^{i = 2k} \overline{u'_{i}} = \sum\limits_{i = 0}^{i = k} \overline{u'_{2i}}.
\end{align}

\begin{claim}
We may choose the endpoints of all the $u_i$'s and $u'_i$'s to be either on lattice points, or \emph{mid-lattice points}, i.e.~the midpoints of two adjacent lattice points.
\end{claim}

\begin{proof}
Fix $u_i$'s and $u'_i$'s that has the fewest number of endpoints not being a lattice point or a mid-lattice point. Suppose the claim is false, then there is some endpoint $x$ that is not a lattice point or a mid-lattice point. Without loss, say this is an endpoint of $u_i$, and its first coordinate is not a multiple of $\frac12$. This forces all its other coordinates to be integral, as $u_1u_2\cdots u_m$ is a curve contained in the lattice $\Z^n$.

But each coordinate of the middle term of (1) is a multiple of $\frac12$, thus there must be another endpoint $y$ of some $u_j$ (possibly with $i=j$) whose first coordinate is also not a multiple of $\frac12$. Now depending on how $x$ and $y$ appear in (1), either we may replace $x$ by $x+\delta e_1$ and $y$ by $y + \delta e_1$ and still keep (1), or we may replace $x$ by $x+\delta e_1$ and $y$ by $y - \delta e_1$ and still keep (1). Say we are in the first case, then choose the smallest positive $\delta$ that makes $x+\delta e_1$ or $y+\delta e_1$ a multiple of $\frac12$, and replace $x$ and $y$ correspondingly. This gives another choice of $u_i$'s and $u'_i$'s which has at least one fewer endpoint not being a lattice point or a mid-lattice point, a contradiction.
\end{proof}

Now, the $2k$ points together with the endpoints of $v_i$'s, which are lattices points by definition, give a refinement $w_1,w_2,\ldots,w_{m/2+2k}$ and $w'_1,w'_2,\ldots,w'_{m/2+2k}$ of the $u_i$'s and $u'_i$'s. Notice that some of these may be empty curves if some of these points coincide. Now rewriting (1) and (2) in terms of $w_i$'s and $w'_i$'s, we get 
\begin{align*}
\sum\limits_{i\in S} \overline{w_{i}} = \frac12\sum\limits_{i = 1}^{i = m/2} \overline{x_{i}} = \sum\limits_{i \in \overline{S}} \overline{w_{i}}
\end{align*}
\begin{align*}
\sum\limits_{i\in T} \overline{w'_{i}} = \frac12\sum\limits_{i = m/2+1}^{i = m} \overline{x_{i}} = \sum\limits_{i \in \overline{T}} \overline{w'_{i}}
\end{align*}
for some $S,T\subset \{1,2,\ldots, m/2+2k\}$, and $\overline{S} = \{1,2,\ldots, m/2+2k\}\setminus S$, $\overline{T} = \{1,2,\ldots, m/2+2k\}\setminus T$. Notice that since $w_i$'s and $w'_i$'s are refinements of $u_i$'s and $u'_i$'s, so we have $|S|, |T|, |\overline{S}|, |\overline{T}| \geq k$. Without loss, we will assume that $|S| \geq |\overline{S}|$, and $|T| \leq |\overline{T}|$. Thus $|S|, |\overline{T}| \leq m/2+2k-k = 5k-1$, and $|\overline{S}|, |T| \leq (m/2+2k)/2 = 3k-1/2$, so $|\overline{S}|, |T| \leq 3k-1$. Hence $|S|+|T| \leq 8k-2 = m$, and $|\overline{S}|+|\overline{T}| \leq 8k-2 = m$.

Now observe that $\sum\limits_{i = 1}^{i = m} \overline{x_{i}} = 0$ by the assumption that $x_1x_2\cdots x_m \in W(\Z^n)$. Thus we have
\begin{align}
\sum\limits_{i\in S} \overline{w_{i}} + \sum\limits_{i\in T} \overline{w'_{i}} = \sum\limits_{i \in \overline{S}} \overline{w_{i}} + \sum\limits_{i \in \overline{T}} \overline{w'_{i}} = 0.
\end{align}

\begin{claim}
There is a partition $\{ w_i \mid 1 \leq i \leq m/2+2k\}$ of $v_1v_2\cdots v_{m/2}$ and $\{ w'_i \mid 1 \leq i \leq m/2+2k\}$ of $v_{m/2+1}\cdots v_{m}$ which is a refinement of $x_i$'s such that:
\begin{itemize}
\item[(a)] This partition satisfies (3).
\item[(b)] Not all curves in the first part of (3) are empty. Also, not all curves in the second part of (3) are empty.
\item[(c)] All endpoints are lattice points.
\end{itemize}
\end{claim}

\begin{proof}
The previous construction gives a partition satisfying (a) with every endpoint being either a lattice point or a mid-lattice point. It also satisfies (b) by (1) and (2) and the assumption that $x_{1}x_{2}\cdots x_{m/2}\neq 0$.Thus, we may choose a partition satisfying (a) and (b) whose endpoints are either a lattice point or a mid-lattice point, with fewest number of endpoints being a mid-lattice point. Supposing the claim is false, let $x$ be an endpoint that is a mid-lattice point. If both of the curves adjacent to it are in the same sum, we may simply replace $x$ to one of the closest lattice point, which reduces the number of endpoints being a mid-lattice point in our partition while still satisfying (a) and (b), a contradiction.

Now without loss, we may assume $x$ is adjacent to $w_1$ and $w_2$, with $1\in S$ and $2\notin S$, and assume $x$ is not integral in the first coordinate, with all other coordinates being integral. Note that in this case $x$ contributes positively to the first part of (3), and negatively to the second part of (3). As the sums equal to 0, there must be another endpoint $y$ which is also not integral in the first coordinate. By the argument in the previous paragraph, the two curves adjacent to $y$ cannot be in the same sum. Thus, similar to the case of $x$, $y$ must contribute to either the first or second part of (3) positively, and the other part negatively. 

We now replace $x$ with $x+\frac12 e_1$ and $y$ with $y\pm\frac12 e_1$, depending on whether $y$ contributes positively or negatively to the second part of (3). This new partition still satisfies (a). If this replacement breaks (b), say all curves in the first part of (3) are empty, then this replacement must have reduced the total length of curves in the first part of (3) by 1. So, we may instead replace $x$ with $x-\frac12 e_1$ and $y$ with $y\mp\frac12 e_1$, and increase the the total length of curves in the first part of (3) by 1, while reducing the total length of the second part by 1. However, as the total length of all curves are $>m$, this second replacement will satisfy (b).

Either case, this will replace the mid-lattice points $x$ and $y$ by lattice points, thus reducing the number of mid-lattice points of the partition. So we have a new partition satisfying (a) and (b) but has fewer endpoints being a mid-lattice point, a contradiction.
\end{proof}

Now we finally get a partition on which (3) holds and all its endpoints are lattice points. Thus, each part of the partition actually represent a word in $\Z^n$. We shall write $y_1,y_2,\ldots,y_m$ to be the words given by $\{ w_i \mid i\in S \}$ and $\{ w'_i \mid i\in T \}$, and $z_1,z_2,\ldots,z_m$ to be the words given by $\{ w_i \mid i\notin S \}$ and $\{ w'_i \mid i\notin T \}$. We let $y_j = \epsilon$ for $j > |T|+|S|$, and $z_k = \epsilon$ for $k > |\overline{T}|+|\overline{S}|$. Then, (3) says that $\sum_0^m \overline{y_i} = \sum_0^m \overline{z_i} = 0$ and condition (b) ensures that $|y_1\cdots y_m|, |z_1\cdots z_m| < |x_1\cdots x_m|$. Thus by induction hypothesis we have two valid sequences ending in $I(y_1,y_2,\ldots,y_m)$ and $I(z_1,z_2,\ldots,z_m)$, respectively. Since $y_i$'s and $z_i$'s is a refinement of $x_i$'s, there is also an instance of one of the last production rules that combines the $y_i$'s and $z_i$'s back to $x_i$'s. Concatenating the two sequences with this production rule will give a valid sequence ending in $I(x_1,x_2,\ldots,x_m)$. This completes the proof.
\end{proof}

We finish with the following corollary by combining our main Theorem with Theorem \ref{finite-index} to get a more general statement.

\begin{corollary}
The word problem of any finitely-generated virtually-abelian group is multiple context-free.
\end{corollary}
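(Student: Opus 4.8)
The plan is to derive the corollary purely formally from the main Theorem together with Theorem \ref{finite-index}; no new geometry is needed. The first step is to observe that a finitely-generated virtually-abelian group is in fact virtually $\Z^n$ for some $n$. Indeed, if $H \le G$ is an abelian subgroup of finite index, then $H$ is finitely generated, since a finite-index subgroup of a finitely-generated group is finitely generated (e.g.\ via the Reidemeister--Schreier process). By the structure theorem for finitely-generated abelian groups, $H \cong \Z^n \oplus F$ with $F$ a finite abelian group, so $\Z^n$ has finite index in $H$, and hence, by transitivity of finite index, $\Z^n$ has finite index in $G$. In particular $G$ is a supergroup of $\Z^n$ of finite index.

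The second step simply combines the two available results. By the main Theorem, $W(\Z^n)$ is multiple context-free with respect to the standard generating set $\{a_1,A_1,\ldots,a_n,A_n\}$. Applying Theorem \ref{finite-index}(3) with $\Z^n$ in the role of ``$G$'' and our group $G$ as the finite-index supergroup, we conclude that $W(G)$ is multiple context-free; part (1) of the same theorem guarantees that this does not depend on the chosen finite generating set of $G$. This completes the argument.

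I do not expect any real obstacle here: the content is entirely in the main Theorem, and the reduction uses only two classical group-theoretic facts — transitivity of the finite-index relation and finite generation of finite-index subgroups of finitely-generated groups. If one wished to avoid quoting Reidemeister--Schreier, an alternative is to pass through Theorem \ref{finite-index}(2) as well: first note that $\Z^n \oplus F$ contains $\Z^n$ with finite index and apply part (3) to get that $W(\Z^n\oplus F)$ is multiple context-free, then apply part (3) once more to climb up to $G$. The direct route via ``$G$ is virtually $\Z^n$'' is the cleanest, so that is the one I would present.
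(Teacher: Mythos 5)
Your argument is correct and is precisely the route the paper intends: the paper presents the corollary as an immediate combination of the main Theorem with Theorem~\ref{finite-index}, and your two supporting facts (a finite-index subgroup of a finitely generated group is finitely generated, and a finitely generated abelian group contains $\Z^n$ with finite index by the structure theorem) are exactly the standard details needed to make ``virtually abelian implies virtually $\Z^n$'' explicit before applying part~(3). No gap.
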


\bibliographystyle{amsalpha}
\bibliography{ref}

\end{document}